\newtheorem{thm}{Theorem}
\theoremstyle{definition}
\newtheorem{defn}{Definition}
\theoremstyle{remark}
\newtheorem{rem}{Remark}
\theoremstyle{assumption}
\newtheorem{assump}{Assumption}
\theoremstyle{fact}
\newtheorem{fact}{Fact}
\theoremstyle{claim}
\theoremstyle{prob}
\newtheorem{prob}{Problem}
\theoremstyle{algo}
\theoremstyle{experiment}
\numberwithin{equation}{section}
\newcommand{\abs}[1]{\left\lvert{#1}\right\rvert}
\newcommand{\norm}[1]{\left\lVert#1\right\rVert}
\newcommand{\R}{\mathbb{R}}
\newcommand{\N}{\mathbb{N}}
\renewcommand{\P}{\mathcal{P}}
\title[]{A note on the existence of stabilizing switching signals for switched linear systems}
\author{Atreyee Kundu}
\address{Department of Electrical Engineering,\\Indian Institute of Science Bangalore,\\Bengaluru - 560012, India,\\ E-mail: atreyeek@iisc.ac.in}
\keywords{Switched systems, Stability, Matrix commutators, Directed graphs, Algorithms}
\date{\today}
\begin{document}

	\begin{abstract}
       This paper deals with stability of discrete-time switched linear systems whose all subsystems are unstable. We present sufficient conditions on the subsystems matrices such that a switched system is globally exponentially stable under a set of purely time-dependent switching signals that are allowed to activate all subsystems. The main apparatuses for our analysis are (matrix) commutation relations between certain products of the subsystems matrices and graph-theoretic arguments. We present a numerical experiment to demonstrate our results.
    \end{abstract}

    \maketitle
\section{Introduction}
\label{s:intro}
\subsection{The problem}
\label{ss:prob_stat}
	A \emph{switched system} has two ingredients --- a family of systems and a switching signal. The \emph{switching signal} selects an \emph{active subsystem} at every instant of time, i.e., the system from the family whose dynamics is currently being followed \cite[\S 1.1.2]{Liberzon2003}. Switched systems find wide applications in power systems and power electronics, automotive control, aircraft and air traffic control, network and congestion control, etc. \cite[p.\ 5]{Sun}.
    
    We consider a discrete-time switched linear system \cite[\S 1.1.2]{Liberzon2003}
    \begin{align}
    \label{e:swsys}
        x(t+1) = A_{\sigma(t)}x(t),\:\:x(0)=x_{0},\:\:t\in\N_{0}
    \end{align}
    generated by
    \begin{itemize}[label = \(\circ\), leftmargin = *]
        \item a family of systems
        \begin{align}
        \label{e:family}
            x(t+1) = A_{\ell}x(t),\:\:x(0) = x_{0},\:\:\ell\in\P,\:\:t\in\N_{0},
        \end{align}
        where \(x(t)\in\R^{d}\) is the vector of states at time \(t\), \(\P = \{1,2,\ldots,N\}\) is an index set, \(A_{\ell}\in\R^{d\times d}\), \(\ell\in\P\) are constant matrices, and
        \item a switching signal \(\sigma:\N_{0}\to\P\) that specifies at every time \(t\), the index of the {active subsystem}, \(i\in\P\).
    \end{itemize}
    The solution to \eqref{e:swsys} is given by
    \begin{align}
    \label{e:soln}
        x(t) = A_{\sigma(t-1)}\ldots A_{\sigma(1)}A_{\sigma(0)}x_{0},\:\:t\in\N,
    \end{align}
    where we have suppressed the dependence of \(x\) on \(\sigma\) for notational simplicity.

    It is well-known that the switched system \eqref{e:swsys} does not necessarily inherit qualitative properties of its constituent subsystems in \eqref{e:family}. For instance, divergent trajectories may be generated by switching appropriately among stable \(A_{\ell}\), \(\ell\in\P\), while a suitably constrained \(\sigma\) may ensure stability of \eqref{e:swsys} even if all \(A_{\ell}\), \(\ell\in\P\), are unstable.\footnote{A matrix \(A\in\R^{d\times d}\) is Schur stable if all its eigenvalues are inside the open unit disk. We call \(A\) unstable if it is not Schur stable.} We will operate under the following assumption:
    \begin{assump}
    \label{a:all_unstable}
    \rm{
        All \(A_{\ell}\), \(\ell\in\P\), are unstable.
    }
    \end{assump}

    Given a family of systems \eqref{e:family} such that Assumption \ref{a:all_unstable} holds, the existence of a switching signal \(\sigma\) under which the switched system \eqref{e:swsys} is stable, clearly depends on the properties of the subsystem matrices \(A_{\ell}\), \(\ell\in\P\). Our objective is to find conditions on \(A_{\ell}\), \(\ell\in\P\), such that they admit switching signals under which the switched system \eqref{e:swsys} is stable. In other words, we are interested in stabilizability \cite{Fiacchini2016} of \eqref{e:swsys}. Recall that
    \begin{defn}{\cite[\S2]{Agrachev2012}}
    \label{d:gues}
    \rm{
        The switched system \eqref{e:swsys} is \emph{globally exponentially stable (GES) under a switching signal \(\sigma\)} if there exist positive numbers \(c\) and \(\lambda\) such that for arbitrary choices of the initial condition \(x_{0}\), the following inequality holds:
        \begin{align}
        \label{e:gues1}
            \norm{x(t)}\leq ce^{-\lambda t}\norm{x_{0}}\:\:\text{for all}\:t\in\N,
        \end{align}
        where \(\norm{v}\) denotes the Euclidean norm of a vector \(v\).
    }
    \end{defn}
    \begin{defn}
    \label{d:stabilizability}
    \rm{
        The switched system \eqref{e:swsys} is called \emph{stabilizable} if there exists a switching signal \(\sigma\) under which \eqref{e:swsys} is
        GES.
    }
    \end{defn}
    We will solve the following problem:
    \begin{prob}
    \label{prob:mainprob}
    \rm{
        Find conditions on the matrices \(A_{\ell}\), \(\ell\in\P\), such that the switched system \eqref{e:swsys} is stabilizable.
    }
    \end{prob}

\subsection{Literature survey}
\label{ss:lit_survey}
    Problem \ref{prob:mainprob} has attracted considerable research attention in the past few decades. Given a set of unstable matrices \(A_{\ell}\), \(\ell\in\P\), the problem of deciding whether (or not) there exists a \(\sigma\) that stabilizes \eqref{e:swsys}, in general, belongs to the class of NP-hard problems, see \cite{Skafidas1999,Vlassis2014} for results and discussions.

    On the one hand, necessary and sufficient conditions for this problem is proposed only recently in \cite{Fiacchini2014}. Verifying the condition of \cite{Fiacchini2014} involves checking the containment of a set in the union of other sets, and hence possesses inherent computational complexity. On the other hand, sufficient conditions for determining the existence of a stabilizing \(\sigma\) are plenty in the literature. Stability of \eqref{e:swsys} in the setting of the so-called min-switching signals \cite{Liberzon2003} is studied in \cite{Geromel2006}. The matrices \(A_{\ell}\), \(\ell\in\P\), are required to satisfy a set of bilinear matrix inequalities (BMIs) called the Lyapunov-Metzler inequalities, for stability under these switching signals. In \cite{Sun2011} the existence of min-switching signals is claimed to be necessary and sufficient for exponential stabilizability of a switched system. Algebraic relations of Lyapunov-Metzler inequalities with classical S-procedure characterization \cite{Lure1944} are explored recently in \cite{Heemels2017}. Generalized versions of both sets of inequalities along with their relations to the classical ones are studied in \cite{Kundu2017, Fiacchini2016}. In \cite{Fiacchini2016} stability of \eqref{e:swsys} is addressed under periodic switching signals based on satisfaction of a set of linear matrix inequalities (LMIs). The proposed LMI conditions are equivalent to the generalized versions of Lyapunov-Metzler inequalities, and is implied by the classical Lyapunov-Metzler inequalities \cite{Fiacchini2016}.
    
    Stabilizability of switched systems under Assumption \ref{a:all_unstable} and various types of pre-specified restrictions on switching signals has also been studied earlier in the literature. In \cite{Fiacchini2018} the authors address stabilization of a switched linear system under a set of switching signals restricted by the language of a non-deterministic finite automaton \cite{Hopcroft} . Such an automaton captures a large class of constraints on the switching signals. The authors present an algorithm to design stabilizing state-dependent switching signals. It is shown using geometry of certain sets that the termination of this algorithm is a necessary and sufficient condition for recurrent stabilizability, which in turn is a sufficient condition for stabilizability of a switched system. In \cite{Kun_arxiv2020} the author studies stability of switched linear systems whose switching signals obey pre-specified restrictions on admissible switches between the subsystems and admissible dwell times on the subsystems. Sufficient conditions involving (matrix) commutation relations between certain products of the subsystems matrices and graph-theoretic arguments are derived.

\subsection{Our contributions}
\label{ss:contri}
   In this paper we propose a new sufficient condition for stabilizability of the switched system \eqref{e:swsys}. Our main contributions are twofold:
    \begin{itemize}[label = \(\circ\), leftmargin = *]
        \item First, we deal with purely time-dependent switching signals that are not restricted to periodic constructions, and
        \item second, we transcend beyond the regime of matrix inequalities based stabilizability conditions, and rely on properties of (matrix) commutators between certain combinations of the subsystem matrices \(A_{\ell}\), \(\ell\in\P\), for this purpose.
    \end{itemize}

    We operate under the assumption that \eqref{e:family} admits two subsystems \(i,j\in\P\), that form a Schur stable combination \(A_{i}^{p}A_{j}^{q}\) for some \(p,q\in\N\). This implies the existence of a stabilizing periodic switching signal that dwells on subsystem \(j\) for \(q\) units of time and then dwells on subsystem \(i\) for \(p\) units of time, and this procedure is repeated. However, we aim for the existence of more general switching signals that are not necessarily periodic.

    Our switching signals are characterized in terms of admissible switches between the subsystems and admissible dwell times on the subsystems. We employ infinite walks on a directed graph for this purpose. Towards ensuring stability of the switched system \eqref{e:swsys} under our switching signals, we rely on the rate of decay of the Schur stable combination formed by two unstable subsystem matrices \(A_{i}\) and \(A_{j}\), \(i,j\in\P\), upper bounds on the norms of the commutators of the subsystem matrices \(A_{\ell}\), \(\ell\in\P\), with the matrix \(A_{i}^{p}A_{j}^{q}\), and certain scalars capturing the properties of these matrices.
    In particular, we solve Problem \ref{prob:mainprob} in two steps:
    \begin{itemize}[label = \(\circ\), leftmargin = *]
    	\item First, given a family of systems \eqref{e:family} such that Assumptions \ref{a:all_unstable} holds, we propose an algorithm to construct purely time-dependent switching signals that are not restricted to periodic construction, and
	\item second, we identify conditions on the subsystem matrices \(A_{\ell}\), \(\ell\in\P\), such that the switched system \eqref{e:swsys} is GES under our switching signals.
    \end{itemize}

    Matrix commutators (Lie brackets) have been employed to study stability of switched systems with all or some stable subsystems earlier in the literature \cite{Narendra1994, Agrachev2012, abc, def}. However, to the best of our knowledge, this is the first instance where a blend of directed graphs and matrix commutators is employed to address stabilizability of the switched system \eqref{e:swsys} with all unstable subsystems under purely time-dependent switching signals that are not restricted to periodic constructions. Our stability conditions involve scalar inequalities, and are, therefore, numerically easier to verify compared to the existing matrix inequalities based conditions. We present a numerical experiment to demonstrate our results.
    	
\subsection{Paper organization}
\label{ss:pap_org}
    The remainder of this paper is organized as follows: we catalog the required preliminaries for our results in \S\ref{s:prelims}. Our main results are presented in \S\ref{s:mainres}. We also discuss various features of our results in this section. We present a numerical experiment in \S\ref{s:num_ex} and conclude in \S\ref{s:concln} with a brief discussion of future research directions. A proof of our main result is presented in \S\ref{s:all_proofs}.

    
\section{Preliminaries}
\label{s:prelims}
	Notice that if the family of systems \eqref{e:family} admits a Schur stable subsystem \(\ell\in\P\), then the set of stabilizing switching signals for \eqref{e:swsys} is necessarily non-empty. Indeed, a switching signal \(\sigma\) that obeys \(\sigma(t) = \ell\) for all \(t\), is stabilizing. However, we do not consider the presence of Schur stable subsystems. Given a set of unstable matrices \(A_{\ell}\), \(\ell\in\P\), it may be the case that there exists no switching signal \(\sigma\) that stabilizes \eqref{e:swsys}. In the sequel we will find sufficient conditions on the matrices \(A_{\ell}\), \(\ell\in\P\), such that there exists a \(\sigma\) under which the switched system \eqref{e:swsys} is GES. Prior to presenting our solution to Problem \ref{prob:mainprob}, we catalog a few preliminaries.

    We will assume that the family of systems \eqref{e:family} admits two unstable systems that form a Schur stable combination of the following nature:
   \begin{assump}
   \label{a:stable_combi}
   \rm{
        There exist \(i,j\in\P\) and \(p,q\in\N\) such that the matrix \(A_{N+1} := A_{i}^{p}A_{j}^{q}\) is Schur stable.
   }
   \end{assump}
   \begin{rem}
   \label{rem:periodic_vs_non-periodic}
   \rm{
        In view of Assumption \ref{a:stable_combi}, it is evident that a switching signal \(\sigma\) that obeys the following:
        \begin{align*}
            \sigma(0) &= j,\\
            \tau_{k+1}-\tau_{k} &=
            \begin{cases}
                q,\:\:\text{if}\:\sigma(\tau_{k})=j,\\
                p,\:\:\text{if}\:\sigma(\tau_{k}) = i,
            \end{cases}
            k = 0,1,2,\ldots,
        \end{align*}
        ensures stability of \eqref{e:swsys}. Indeed, we are dealing with stability of the difference equation
        \[
            x(\tilde{t}+1) = A_{N+1} x(\tilde{t}),\:\:x(0) = x_{0},
        \]
        where \(A_{N+1}\) is Schur stable and \(\tilde{t}:t = (p+q):1\). Clearly, if  Assumption \ref{a:stable_combi} holds, then we have a periodic switching signal that stabilizes \eqref{e:swsys}, and hence satisfaction of Assumption \ref{a:stable_combi} is a trivial solution to Problem  \ref{prob:mainprob}. However, this solution does not activate all subsystems unless \(\P = \{1,2\}\). In this paper we intend to transcend beyond periodic construction of stabilizing switching signals and allow the activation of all subsystems.
   }
   \end{rem}

   The following fact is known from the properties of Schur stable matrices:
   \begin{fact}
   \label{fact:m_defn}
   \rm{
        There exist \(m\in\N\) and \(\rho\in]0,1[\) such that the following condition holds:
        \begin{align}
        \label{e:m_ineq}
            \norm{A_{N+1}^{m}}\leq\rho.
        \end{align}
   }
   \end{fact}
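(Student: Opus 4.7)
The plan is to exploit the Schur stability of $A_{N+1}$, guaranteed by Assumption \ref{a:stable_combi}, and convert the spectral condition (all eigenvalues strictly inside the unit disk) into a quantitative estimate on the norm of a sufficiently high power of $A_{N+1}$. The cleanest route is via Gelfand's formula, which states that for any matrix $A$, the spectral radius satisfies
\[
    r(A) = \lim_{k\to\infty} \norm{A^{k}}^{1/k}.
\]
Applied to $A_{N+1}$, this identity gives $\lim_{k\to\infty} \norm{A_{N+1}^{k}}^{1/k} = r(A_{N+1}) < 1$.

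With this in hand, I would fix any real number $\rho'$ with $r(A_{N+1}) < \rho' < 1$. By the definition of the limit above, there exists $m\in\N$ such that $\norm{A_{N+1}^{m}}^{1/m} \le \rho'$, and hence
\[
    \norm{A_{N+1}^{m}} \le (\rho')^{m} =: \rho \in {]0,1[},
\]
which is exactly \eqref{e:m_ineq}.

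As an alternative if one prefers to avoid citing Gelfand's formula, the same conclusion follows from the Jordan canonical form: writing $A_{N+1} = PJP^{-1}$, every diagonal block $J_\ell$ associated with an eigenvalue $\lambda_\ell$ of $A_{N+1}$ satisfies $\norm{J_\ell^{k}} \le C\,k^{d-1}\abs{\lambda_\ell}^{k}$ for a constant $C$ depending on $P$ and $d$. Because Assumption \ref{a:stable_combi} forces $\max_\ell \abs{\lambda_\ell} < 1$, the upper bound $\norm{A_{N+1}^{k}} \le \tilde{C}\,k^{d-1}\bigl(\max_\ell\abs{\lambda_\ell}\bigr)^{k}$ tends to $0$ as $k\to\infty$, so picking $m$ large enough makes the right-hand side less than any preassigned $\rho\in{]0,1[}$.

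There is no real obstacle here — the assertion is entirely standard linear algebra — so the only choice is stylistic: rely on Gelfand's formula for a one-line derivation, or unpack the Jordan form to keep the note self-contained. I would go with the Gelfand route because it immediately produces both $m$ and $\rho$ with no further bookkeeping.
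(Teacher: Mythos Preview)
Your argument via Gelfand's formula (and the Jordan-form alternative) is correct and entirely standard. The paper itself does not prove Fact~\ref{fact:m_defn} at all --- it simply states it as a known consequence of Schur stability --- so your proposal supplies exactly the justification the paper omits.
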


   Let
   \begin{align}
   \label{e:M12_defn}
        M_{1} = \max_{\ell\in\P}\norm{A_{\ell}}\:\:\text{and}\:\:M_{2} = \norm{A_{N+1}}.
   \end{align}

   We will employ the following set of (matrix) commutators as the main apparatus for our analysis:
   \begin{align}
   \label{e:E_comm}
        E_{\ell,N+1} = A_{\ell}A_{N+1} - A_{N+1} A_{\ell},\:\:\ell\in\P.
   \end{align}
   Notice that \(E_{\ell, N+1}\) is the commutator between the matrix \(A_{\ell}\) and the matrix product \(A_{i}^{p}A_{j}^{q}\), \(\ell\in\P\), where \(i\), \(j\), \(p\) and \(q\) are as described in Assumption \ref{a:stable_combi}.

   We are now in a position to present our main result.
\section{Main results}
\label{s:mainres}
    We will solve Problem \ref{prob:mainprob} in two steps:
    \begin{itemize}[label = \(\circ\), leftmargin = *]
        \item First, we will present an algorithm to construct switching signals \(\sigma\), and
        \item second, we will find sufficient conditions on the subsystem matrices, \(A_{\ell}\), \(\ell\in\P\), such that the switched system \eqref{e:swsys} is GES under switching signals obtained from our algorithm.
    \end{itemize}

\subsection{Algorithmic construction of switching signals}
\label{ss:sw_algo}
    \begin{algorithm*}[htbp]
	\caption{Construction of switching signals}\label{algo:sw-sig_construc}
		\begin{algorithmic}[1]
			\renewcommand{\algorithmicrequire}{\textbf{Input:}}
			\renewcommand{\algorithmicensure}{\textbf{Output:}}
			
			\REQUIRE A family of systems \eqref{e:family} such that Assumptions \ref{a:all_unstable} and \ref{a:stable_combi} hold.
            \ENSURE A switching signal, \(\sigma\).
         		\STATE {\bf Step I}: Construct a directed graph \(G(V,E)\) as follows:\label{step:graph}
                \STATE Create a vertex set \(V = \P\cup\{N+1\}\).
                \STATE Create an edge set
                \begin{align*}
                    \hspace*{2cm}E = \{(\ell,\ell+1),\ell\in\P\setminus\{N\}\}\cup\{(\ell,N+1),\ell\in\P\}
                    \cup\{(N+1,\ell),\ell\in\P\}.
                \end{align*}
           \STATE {\bf Step II}: Construct a switching signal \(\sigma\) as an infinite walk on \(G(V,E)\) as follows:\label{step:sw-sig}
                \STATE Set \(k = 0\) and \(\tau_{k} = 0\).
                \STATE Pick a vertex \(v_{k}\in V\).
                \STATE Set \(v^{-} = v_{k}\). \label{step:repeat}
                \IF {\(v_{k}\in\P\)}
                    \STATE Set \(\sigma(\tau_{k}) = v_{k}\).\
                  \STATE Set \(k = k+1\) and \(\tau_{k} = \tau_{k-1}+1\).
               \ELSE
                   \STATE Set \(\sigma(\tau_{k}) = j\).\
                    \STATE Set \(k = k+1\) and \(\tau_{k} = \tau_{k-1}+q\).
                    \STATE Set \(\sigma(\tau_{k}) = i\).\
                    \STATE Set \(k = k+1\) and \(\tau_{k} = \tau_{k-1}+p\).
		\ENDIF
            \STATE Pick a vertex \(v_{k}\in V\) such that \((v^{-},v_{k})\in E\).
            \STATE Go to \ref{step:repeat}.
		\end{algorithmic}
	\end{algorithm*}

        Given a family of systems \eqref{e:family} such that Assumptions \ref{a:all_unstable} and \ref{a:stable_combi} hold, Algorithm \ref{algo:sw-sig_construc} designs a switching signal \(\sigma\) in two steps:
        \begin{itemize}[label = \(\circ\), leftmargin = *]
        		\item In \ref{step:graph} we construct a directed graph \(G(V,E)\). It has \(N+1\) vertices in which the vertices \(1,2,\ldots,N\) correspond to the given \(N\) subsystems and the vertex \(N+1\) corresponds to the Schur stable combination formed by subsystems \(A_{i}\) and \(A_{j}\), \(i,j\in\P\). Each vertex (subsystem) \(\ell\in V\setminus\{N, N+1\}\) has an outgoing edge to the vertex (subsystem) \(\ell+1\), and each vertex \(\ell\in\P\) has an outgoing edge to the vertex \(N+1\) and an incoming edge from the vertex \(N+1\). These edges correspond to switches between the subsystems that we allow to be caused by our switching signals.
		\item In \ref{step:sw-sig} we construct a switching signal \(\sigma\) as an infinite walk on the directed graph \(G(V,E)\).\footnote{Recall that a \emph{walk} on a directed graph is an alternating sequence of vertices and edges \(W = v_{0}, e_{1}, v_{1}, e_{2}, v_{2}, \ldots, v_{n-1}, e_{n}, v_{n}\), where \(v_{\ell}\in V\), \(e_{\ell} = (v_{\ell-1},v_{\ell})\in E\), \(0 < \ell \leq n\). The length of a walk is its number of edges, counting repetitions, e.g., the length of \(W\) above is \(n\). By the term \emph{infinite walk} we mean a walk of infinite length, i.e., it has infinitely many edges.}  For the vertices \(1,2,\ldots,N\), the corresponding subsystem is activated for \(1\) unit of time, and for the vertex \(N+1\), subsystem \(j\) is activated for \(q\) unit(s) of time followed by activation of subsystem \(i\) for \(p\) unit(s) of time.
        \end{itemize}
        By the structure of \(G(V,E)\), a \(\sigma\) obtained from Algorithm \ref{algo:sw-sig_construc} is not restricted to periodic constructions.

    \begin{rem}
    \label{rem:graph_construc}
    \rm{
    	In Algorithm \ref{algo:sw-sig_construc} the activation of the vertex \(N+1\) corresponds to dwelling on subsystem \(j\) for \(q\) units of time followed by dwelling on subsystem \(i\) for \(p\) units of time. Notice that by construction of \(G(V,E)\), two consecutive instances of such activations are at most \(N\) units of time apart. In particular, we allow the switches from subsystem \(\ell\) to subsystem \(\ell+1\), \(\ell\in\P\setminus\{N\}\) and not vice-versa, to exclude the possibility that the Schur stable combination \(A_{N+1}\) is never caused by \(\sigma\). In fact, our analysis works for one-directional switching between the subsystems \(\ell\in\P\) chosen in any order of the indices.
    }
    \end{rem}

\subsection{Sufficient conditions for stability of \eqref{e:swsys}}
\label{ss:stab_condns}
	It is evident that given a family of systems \eqref{e:family}, whether a switching signal \(\sigma\) obtained from Algorithm \ref{algo:sw-sig_construc} stabilizes \eqref{e:swsys} or not, depends on the properties of the subsystem matrices \(A_{\ell}\), \(\ell\in\P\). The following theorem identifies sufficient conditions on \(A_{\ell}\), \(\ell\in\P\), such that a \(\sigma\) obtained from our algorithm ensures GES of \eqref{e:swsys}:
	\begin{thm}
	\label{t:mainres}
		Consider a family of systems \eqref{e:family}. Suppose that Assumptions \ref{a:all_unstable} and \ref{a:stable_combi} hold. Let \(\lambda\) be an arbitrary positive number satisfying
		\begin{align}
		\label{e:maincondn1}
			\rho e^{\lambda m(p+q)} < 1.
		\end{align}
		Suppose that there exists a scalar \(\varepsilon\) small enough such that the following conditions hold:
		\begin{align}
		\label{e:maincondn2}
			\norm{E_{\ell,N+1}}\leq\varepsilon\:\:\text{for all}\:\:\ell\in\P,
		\end{align}
		and
		\begin{align}
		\label{e:maincondn3}
			\rho e^{\lambda m(p+q)} + N\frac{m(m+1)}{2}M_{1}^{mN-1}M_{2}^{m-1}\varepsilon e^{\lambda{m(p+q)}+\lambda mN} \leq 1.
		\end{align}
		Then the switched system \eqref{e:swsys} is GES under a switching signal obtained from Algorithm \ref{algo:sw-sig_construc}.
	\end{thm}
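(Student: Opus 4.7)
The plan is to analyze the state-transition matrix over a ``super-block'' consisting of exactly $m$ activations of the Schur-stable combination $A_{N+1}=A_{i}^{p}A_{j}^{q}$, and then iterate. By the structure of $G(V,E)$ produced by Algorithm~\ref{algo:sw-sig_construc}, any two consecutive visits of vertex $N+1$ along the walk are separated by a $\P$-chain of length $r\in\{1,\ldots,N\}$. Hence the transition matrix over a single super-block takes the form
\[
\Phi \;=\; B_{m}\,A_{N+1}\,B_{m-1}\,A_{N+1}\,\cdots\, B_{1}\,A_{N+1},
\]
where each $B_{s}$ is a product of at most $N$ matrices drawn from $\{A_{\ell}:\ell\in\P\}$, and the super-block length satisfies $T=m(p+q)+\sum_{s}r_{s}\leq m(p+q)+mN$.

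The central algebraic step is to re-express $\Phi=A_{N+1}^{\,m}\,B+R$, with $B=B_{m}B_{m-1}\cdots B_{1}$ and $R$ a sum of error terms, by iterating the identity $A_{\ell}A_{N+1}=A_{N+1}A_{\ell}+E_{\ell,N+1}$. Moving each copy of $A_{N+1}$ leftward past the interposed $B_{s}$'s in a bubble-sort fashion, the $s$-th chunk is crossed by precisely $s$ copies of $A_{N+1}$; since $B_{s}$ contains at most $N$ individual matrices, the total number of single-matrix swaps, and hence of additive error terms generated in $R$, is bounded by
\[
\sum_{s=1}^{m}s\cdot r_{s}\;\leq\; N\cdot\tfrac{m(m+1)}{2}.
\]
In every such error term the swapped pair $A_{\ell}A_{N+1}$ is replaced by the single factor $E_{\ell,N+1}$, leaving $m-1$ untouched factors of $A_{N+1}$ (each of norm at most $M_{2}$) and at most $mN-1$ untouched subsystem matrices (each of norm at most $M_{1}$). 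Using \eqref{e:maincondn2} this yields
\[
\|R\|\;\leq\;N\,\tfrac{m(m+1)}{2}\,M_{1}^{mN-1}M_{2}^{m-1}\,\varepsilon,
\]
while Fact~\ref{fact:m_defn} controls the main term $A_{N+1}^{\,m}\,B$ through $\|A_{N+1}^{\,m}\|\leq\rho$.

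To upgrade this per-super-block estimate to exponential decay of $\|x(t)\|$, I multiply through by the weight $e^{\lambda T}$. Conditions~\eqref{e:maincondn1} and~\eqref{e:maincondn3} are arranged precisely so that the resulting inequality $\|\Phi\|\,e^{\lambda T}\leq 1$ holds uniformly over every admissible super-block configuration: the factor $e^{\lambda m(p+q)}$ accounts for the time spent in the $A_{N+1}$ blocks, and the additional factor $e^{\lambda mN}$ in the error term accommodates the worst-case interposed $\P$-chains of total length $mN$. Iterating this inequality across consecutive super-blocks, together with a routine bound on the growth of $\|x\|$ within a super-block, delivers $\|x(t)\|\leq c\,e^{-\lambda t}\|x_{0}\|$ as required by Definition~\ref{d:gues}.

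I expect the main obstacle to be the combinatorial bookkeeping of the commutator expansion: one has to organize the iterated substitutions so that each of the $N\,m(m+1)/2$ correction terms simultaneously realizes the uniform norm bound $M_{1}^{mN-1}M_{2}^{m-1}\varepsilon$ with the correct exponents on $M_{1}$ and $M_{2}$, and to align the weighted per-super-block bound with the exact coefficients appearing on the left-hand side of \eqref{e:maincondn3}. A related subtlety is that the $\P$-chain lengths $r_{s}$ vary across super-blocks, so all estimates must be taken in the worst case $r_{s}=N$, which is what produces the exponents $mN-1$ and $\lambda mN$ in the theorem's hypothesis.
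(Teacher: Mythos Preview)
Your commutator bookkeeping is correct and matches the paper's: the count $N\,m(m+1)/2$ and the per-term bound $M_{1}^{mN-1}M_{2}^{m-1}\varepsilon$ are right. The gap is in how you handle the \emph{main} term. From $\Phi=A_{N+1}^{m}B+R$ you get
\[
\|\Phi\|\,e^{\lambda T}\;\leq\;\rho\,\|B\|\,e^{\lambda T}+\|R\|\,e^{\lambda T},
\]
and with $T=m(p+q)+\sum_{s}r_{s}$ the first summand is $\rho\,\|B\|\,e^{\lambda m(p+q)}e^{\lambda\sum r_{s}}$, not $\rho\,e^{\lambda m(p+q)}$. Since $B$ is a product of up to $mN$ unstable matrices, $\|B\|$ can be as large as $M_{1}^{mN}$, and nothing in \eqref{e:maincondn1}--\eqref{e:maincondn3} bounds $\rho\,M_{1}^{mN}e^{\lambda m(p+q)+\lambda mN}$ by $1$. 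So the contraction $\|\Phi\|e^{\lambda T}\leq 1$ that you need to iterate across super-blocks does \emph{not} follow from the stated hypotheses; your scheme would require the stronger condition obtained by inserting a factor $M_{1}^{mN}e^{\lambda mN}$ into the first term of \eqref{e:maincondn3}.

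The paper sidesteps this by arguing not block-by-block but by induction on the total product length. One writes $M=LR$ with $|R|=m(p+q)+mN$, rearranges $R=R_{1}A_{N+1}^{m}+R_{2}$ (same commutator expansion you describe), and then bounds
\[
\|M\|\leq \|LR_{1}\|\,\|A_{N+1}^{m}\|+\|L\|\,\|R_{2}\|.
\]
The crucial point is that $\|LR_{1}\|$ and $\|L\|$ are estimated via the \emph{induction hypothesis} as $ce^{-\lambda|LR_{1}|}$ and $ce^{-\lambda|L|}$, so the potentially large factor $\|R_{1}\|$ (your $\|B\|$) never appears explicitly: the growth of the $\P$-chain is absorbed into the already-established exponential bound on the shorter product $LR_{1}$. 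This is exactly what makes the first term collapse to $\rho\,e^{\lambda m(p+q)}$ and align with \eqref{e:maincondn3}. To repair your argument you should replace the super-block iteration by this induction on $|M|$ (with base case $|M|\leq m(p+q)+mN$ handled by choosing $c$ large).
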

	
	Theorem \ref{t:mainres} is our solution to Problem \ref{prob:mainprob}; a proof of it is presented in \S\ref{s:all_proofs}. Given a family of systems \eqref{e:family} such that Assumptions \ref{a:all_unstable} and \ref{a:stable_combi} hold, in Algorithm \ref{algo:sw-sig_construc} we construct a switching signal \(\sigma\). Theorem \ref{t:mainres} provides sufficient conditions on the subsystem matrices \(A_{\ell}\), \(\ell\in\P\), under which such a \(\sigma\) ensures GES of the switched system \eqref{e:swsys}. Since \(\rho < 1\), it is always possible to find a \(\lambda > 0\) (could be small) such that condition \eqref{e:maincondn1} holds. If in addition, the Euclidean norms of (matrix) commutators of \(A_{\ell}\) and \(A_{N+1}\), \(\ell\in\P\), are bounded above by a scalar \(\varepsilon\) small enough such that condition \eqref{e:maincondn3} holds, then \eqref{e:swsys} is GES under a \(\sigma\) obtained from Algorithm \ref{algo:sw-sig_construc}. In the simplest case when the matrices \(A_{\ell}\) and \(A_{N+1}\) commute for all \(\ell\in\P\), condition \eqref{e:maincondn3} reduces to condition \eqref{e:maincondn1}. Theorem \ref{t:mainres} accommodates sets of matrices \(A_{\ell}\), \(\ell\in\P\), for which \(A_{\ell}\) and \(A_{N+1}\), \(\ell\in\P\), do not necessarily commute, but are ``close'' to sets of matrices for which they commute. This feature associates an inherent robustness with our stabilizability conditions. Indeed, if we are relying on approximate models of \(A_{\ell}\), \(\ell\in\P\), or the elements of \(A_{\ell}\), \(\ell\in\P\), are prone to evolve over time, then GES of \eqref{e:swsys} holds under a \(\sigma\) obtained from Algorithm \ref{algo:sw-sig_construc} as long as the commutators of \(A_{\ell}\) and \(A_{N+1}\), \(\ell\in\P\), in their Euclidean norm, are bounded above by a small scalar \(\varepsilon\) such that condition \eqref{e:maincondn3} holds.

    \begin{rem}
    \label{rem:stable_combi}
    \rm{
    	The motivation behind Assumption \ref{a:stable_combi} is the use of purely time-dependent stabilizing switching signals. In general, if the subsystem matrices \(A_{\ell}\), \(\ell\in\P\), do not form any type of Schur stable combination, then information about the system state \(x(t)\), \(t\in\N_{0}\), is required to guarantee global stability of \eqref{e:swsys}. However, for complex systems the access to full state information may not be available at every time instant.
    }
    \end{rem}
    
    \begin{rem}
    \label{rem:state-vs-time}
    \rm{
    	A vast body of the literature relies on state-dependent switching signals for stabilizability of the switched system \eqref{e:swsys}. In contrast, the recent works \cite{Fiacchini2016,Kun_arxiv2020} deal with purely time-dependent switching signals. In \cite{Fiacchini2016} the authors propose (among others) purely time-dependent (in particular, periodic) switching signals that ensure stability of \eqref{e:swsys}. They show that if a family of systems \eqref{e:family} admits a stabilizing switching signal, then it is not necessary that it admits a stabilizing periodic switching signal. They also propose necessary and sufficient conditions for the existence of stabilizing periodic switching signals based on LMIs. Our results differ from \cite{Fiacchini2016} in the following two aspects: 
	\begin{itemize}[label = \(\circ\), leftmargin = *]
		\item first, our switching signals are not restricted to periodic construction, and
		\item second, our stability conditions involve scalar inequalities.
	\end{itemize}
	In \cite{Kun_arxiv2020} the author considers stabilizability of \eqref{e:swsys} under restrictions on admissible switches between the subsystems and admissible dwell times on the subsystems. The proposed stability conditions involve matrix commutator conditions and graph theoretic arguments. Our work differs from \cite{Kun_arxiv2020} in the following two aspects:
	\begin{itemize}[label = \(\circ\), leftmargin = *]
		\item first, our switching signals are unrestricted, thereby allowing us to choose the subsystems to be activated and the dwell times on them as per our requirements, and
		\item second, we employ matrix commutators between a Schur stable combination of two subsystems matrices and each individual subsystem matrix, while in \cite{Kun_arxiv2020} matrix commutators of products of subsystems matrices individually and with others are employed.
	\end{itemize}
	The choices of matrix commutators under consideration in both \cite{Kun_arxiv2020} and this paper are, however, not unique. Commutation relations between the subsystems matrices or certain products of these matrices have also been employed to study stability of the switched system \eqref{e:swsys} with some or all stable subsystems and restricted or unrestricted switching, see e.g.,  \cite{Narendra1994,Agrachev2012,abc,def, stab_cycle}. The crux of the (matrix) commutation relations based stability analysis for switched systems lies in splitting matrix products into sums and applying combinatorial arguments on them, see \cite{Agrachev2012}, where this analysis technique was introduced. In this paper we treat stabilizability of \eqref{e:swsys} under purely time-dependent switching signals constructed as infinite walks on a directed graph. Our switching signals are not restricted to periodic construction, and the proposed conditions on \(A_{\ell}\), \(\ell\in\P\), for stability are only sufficient. The activation of the subsystems \(\ell\in\P\) for \(1\) unit of time, and the subsystems \(i\) and \(j\) for \(p\) and \(q\) units of time, respectively, to form a Schur stable combination, facilitates us to employ the matrix commutators between the matrices \(A_{\ell}\), \(\ell\in\P\), and the matrix \(A_{N+1}\). See \S\ref{s:all_proofs} for details of the procedure of splitting a matrix product into sums and applying combinatorial arguments on them. One may employ different choices of commutators to split a matrix product into sums, thereby leading to different sets of sufficient solutions to Problem \ref{prob:mainprob}.
	}
    \end{rem}
    
%
    \begin{rem}
    \label{rem:matrix-vs-scalar_inequality}
    \rm{
    	Prior results on stabilizability of \eqref{e:swsys} are primarily based on satisfaction of certain sets of matrix inequalities. For instance, stability of \eqref{e:swsys} under the so-called min switching signal \cite{Heemels2017} is guaranteed if the subsystem matrices \(A_{\ell}\), \(\ell\in\P\), satisfy the Lyapunov-Metzler inequalities \cite{Geromel2006} or the S-procedure characterization \cite{Heemels2017}. These are BMIs, and are numerically difficult to verify. The necessary and sufficient condition for the existence of a stabilizing periodic switching signal proposed in \cite{Fiacchini2016} involves satisfaction of LMI-based conditions. In contrast, our stability conditions rely on scalar inequalities involving upper bounds on the Euclidean norms of a set of commutators of certain combinations of the subsystem matrices, a set of scalars obtained from the properties of these matrices, and the total number of subsystems. The satisfaction of Assumption \ref{a:stable_combi} can be checked by using the LMI based condition presented in \cite{Fiacchini2016}.
    }
    \end{rem}

\section{A numerical experiment}
\label{s:num_ex}
	We design the following two routines in MATLAB R2020b on an OS-X platform:
	\begin{enumerate}[leftmargin = *]
		\item The first routine checks our stability conditions.
		\begin{itemize}[label = \(\circ\), leftmargin = *]
			\item We consider a family of systems \eqref{e:family} with \(N = 10\). We generate the matrices \(A_{\ell}\in\R^{2\times 2}\), \(\ell\in\P\), by picking elements from the interval \([-1,1]\) uniformly at random. It is ensured that all the matrices are unstable.
			\item We check the conditions of Theorem \ref{t:mainres} and obtain 
			\begin{align*}
    	&\rho e^{\lambda m} + N\frac{m(m+1)}{2}M_1^{mN-1}M_2^{m-1}\varepsilon e^{\lambda m(p+q)+\lambda m N}\\  
	&= 0.9235 < 1.
    \end{align*}
		\end{itemize}
		\item The second routine generates stabilizing switching signals.
			\begin{itemize}[label = \(\circ\), leftmargin = *]
				\item We employ Algorithm \ref{algo:sw-sig_construc} to generate a switching signal \(\sigma\).
				\item We pick \(100\) different initial conditions from the interval \([-1,1]^{2}\) uniformly at random and study the solution of the switched system \eqref{e:swsys} under the above \(\sigma\). GES of \eqref{e:swsys} is observed. We plot \((\sigma(t))_{t\in\N_0}\) and its corresponding \((\norm{x(t)})_{t\in\N_0}\) in Figures \ref{fig:plot2} and \ref{fig:plot1}, respectively.
			\end{itemize}
	\end{enumerate} 
	\begin{figure}[htbp]
	\centering
		\includegraphics[scale=0.5]{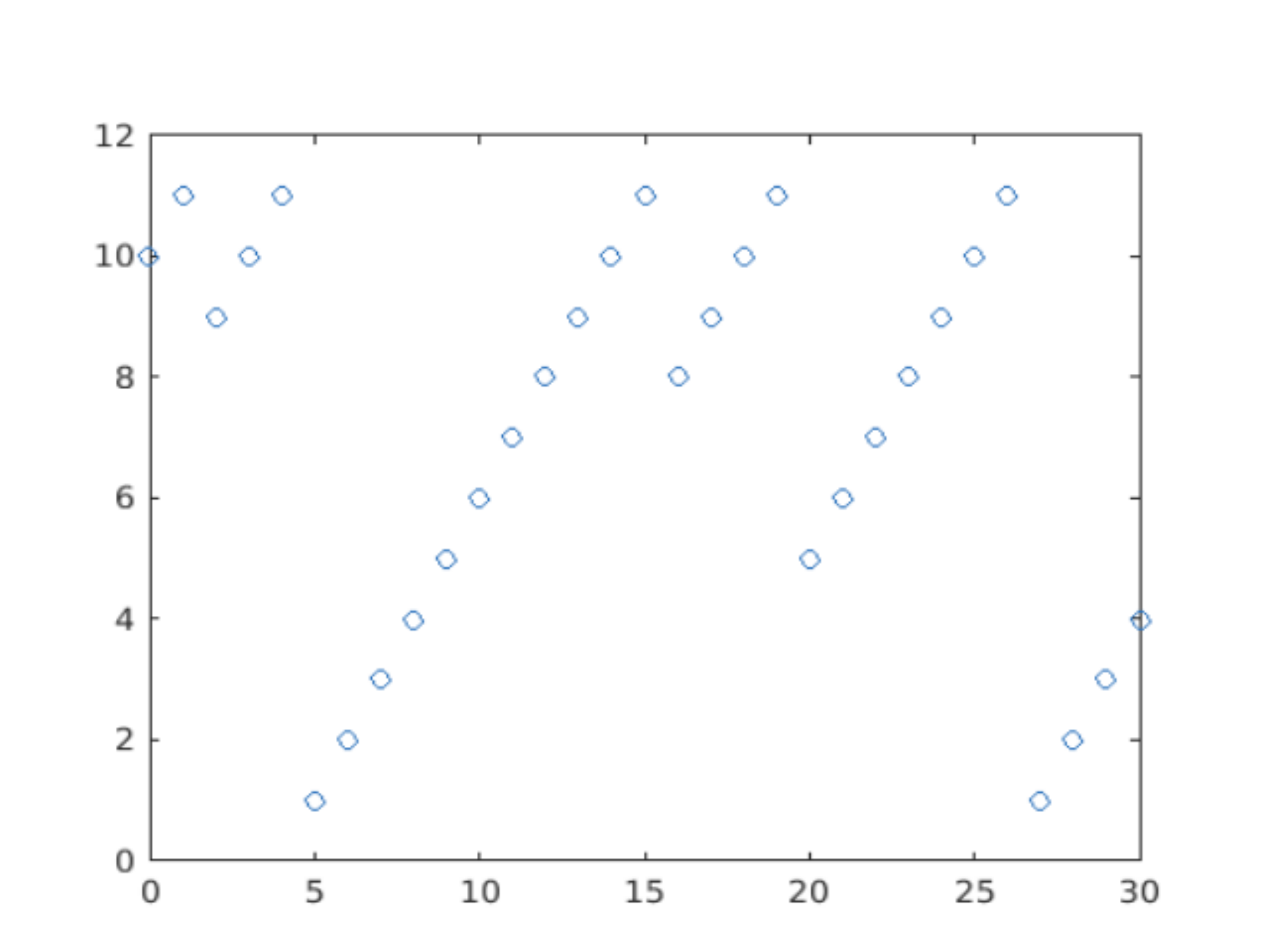}
		\caption{\((\sigma(t))_{t\in\N_0}\) obtained from Algorithm \ref{algo:sw-sig_construc}}\label{fig:plot2}
	\end{figure}

	\begin{figure}[htbp]
	\centering
		\includegraphics[scale=0.5]{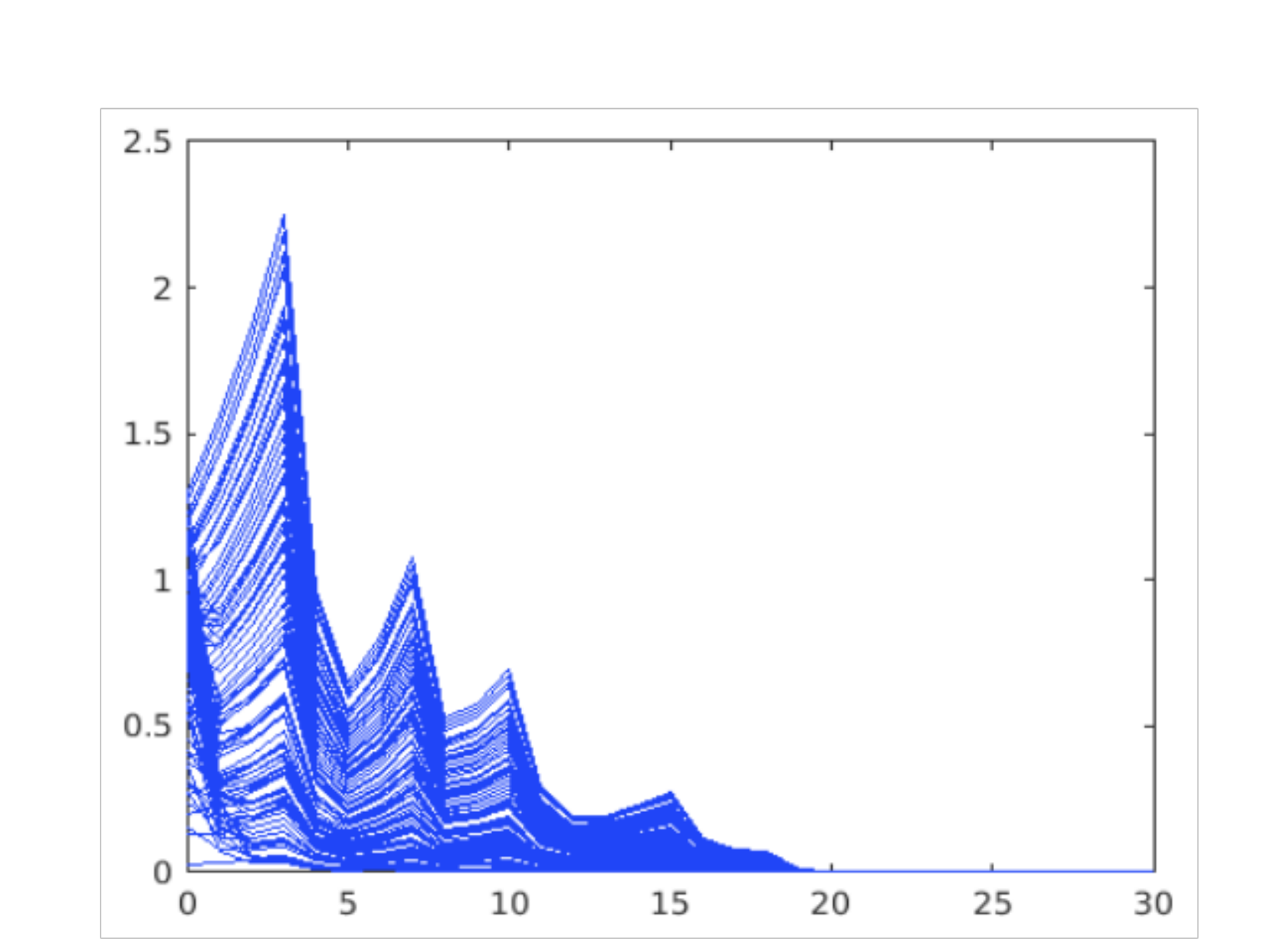}
		\caption{\((\norm{x(t)})_{t\in\N_0}\) under \(\sigma\) demonstrated in Figure \ref{fig:plot2}}\label{fig:plot1}
	\end{figure}
\section{Concluding remarks}
\label{s:concln}
	In this paper we studied stabilizability of switched linear systems whose all subsystems are unstable. Our stabilizing switching signals are purely time-dependent, are allowed to activate all subsystems and are not restricted to periodic construction. The stability conditions proposed in this paper rely on the existence of a Schur stable combination formed by two subsystems and involve scalar inequalities. We identify the following two directions for future work: first, determining the relations between Theorem \ref{t:mainres} and the existing sufficient conditions (\'{a} la \cite{Fiacchini2016}) and second, the extension of our techniques to stabilizability of switched nonlinear systems. The findings on these topics will be reported elsewhere.
\section{Proof of our result}
\label{s:all_proofs}
    Fix a switching signal \(\sigma\) obtained from Algorithm \ref{algo:sw-sig_construc}. Let \(M\) be the corresponding word (matrix product) defined as
    \[
        M = \cdots A_{\sigma(2)}A_{\sigma(1)}A_{\sigma(0)}.
    \]
    The condition \eqref{e:gues1} for GES of \eqref{e:swsys} under \(\sigma\) can be written equivalently as \cite[\S2]{Agrachev2012}: there exist positive numbers \(c\) and \(\lambda\) such that
    \begin{align}
    \label{e:gues2}
        \norm{M}\leq ce^{-\lambda\abs{M}}\:\:\text{for all}\:\:\abs{M},
    \end{align}
    where \(\abs{M}\) denotes the length of the product \(M\), i.e., the number of matrices that appear in \(M\), counting repetitions.

    It, therefore, suffices to show that if the conditions of Theorem \ref{t:mainres} hold, then condition \eqref{e:gues2} is satisfied under a \(\sigma\) obtained from Algorithm \ref{algo:sw-sig_construc}.

    \begin{proof}[Proof of Theorem \ref{t:mainres}]
        Let \(\sigma\) be a switching signal obtained from Algorithm \ref{algo:sw-sig_construc}, and \(M\) be the corresponding matrix product. We will show that if the subsystem matrices \(A_{\ell}\), \(\ell\in\P\), satisfy the conditions of Theorem \ref{t:mainres}, then the condition \eqref{e:gues2} is true. We will employ mathematical induction on \(\abs{M}\) for this purpose.

        {\it A. Induction basis}: Pick \(c\) large enough so that \eqref{e:gues2} holds for \(M\) satisfying \(\abs{M}\leq m(p+q)+mN\).

        {\it B. Induction hypothesis}: Let \(\abs{M}\geq m(p+q)+mN+1\) and assume that \eqref{e:gues2} holds for length less than \(\abs{M}\).

        {\it C. Induction step}: Let \(M = LR\), where \(\abs{R} = m(p+q)+mN\). We observe that \(R\) contains at least \(m\) instances of the matrix \(A_{N+1}\). Indeed, this property of \(R\) follows from the construction of \(\sigma\) as an infinite walk on \(G(V,E)\).

        Let us rewrite \(R\) as
        \begin{align}
        \label{e:pf1_step1}
            R = R_{1}A_{N+1}^{m} + R_{2},
        \end{align}
        where
        \begin{align*}
            \abs{R_{1}} &= m(p+q) +mN - \abs{A_{N+1}^{m}} = mN.
        \end{align*}

        The following two possibilities exist:\\
        \begin{enumerate}[label = Case \Roman*., leftmargin = *]
            \item If \(\sigma(0)\in\P\), then \(R_{2}\) contains at most \(\displaystyle{N\frac{m(m+1)}{2}}\) terms of length \((m-1)(p+q)+mN\) with \(m-1\) \(A_{N+1}\), \(mN-1\) \(A_{\ell}\), \(\ell\in\P\), and \(1\) \(E_{\ell,N+1}\), \(\ell\in\P\) generated by exchanging the \(k\)-th instance (reading from the right) of \(A_{N+1}\) in \(R\) with at most \(kN\) instances of \(A_{\ell}\), \(\ell\in\P\), towards achieving the structure of \(R\) given in \eqref{e:pf1_step1}, \(k=1,2,\ldots,m\).
                             \item If \(\sigma(0) = N+1\), then \(R_{2}\) contains at most \(\displaystyle{N\frac{m(m-1)}{2}}\) terms of length \((m-1)(p+q)+mN\) with \(m-1\) \(A_{N+1}\), \(mN-1\) \(A_{\ell}\), \(\ell\in\P\), and \(1\) \(E_{\ell.N+1}\), \(\ell\in\P\) generated by exchanging the \(k\)-th instance (reading from the right) of \(A_{N+1}\) in \(R\) with \((k-1)N\) instances of \(A_{\ell}\), \(\ell\in\P\), towards achieving the structure of \(R\) in \eqref{e:pf1_step1}, \(k=1,2,\ldots,m\).\\

        \end{enumerate}

        Now, applying the sub-multiplicativity and sub-additivity properties of the induced Euclidean norm, we obtain
        \begin{align}
        \label{e:pf1_step2}
            &\norm{M} = \norm{LR} = \norm{L(R_{1}A_{N+1}^{m}+R_{2})}\nonumber\\
            &\leq \norm{LR_{1}}\norm{A_{N+1}^{m}}+\norm{L}\norm{R_{2}}\nonumber\\
            &\leq\rho c e^{-\lambda(\abs{M}-m(p+q))}+ce^{-\lambda(\abs{M}-m(p+q)-mN)}\norm{R_{2}},
        \end{align}
        where the upper bounds on \(\norm{LR_{1}}\) and \(\norm{L}\) are obtained by using the relations \(\abs{M} = \abs{LR_{1}} + \abs{A_{N+1}^{m}}\) and \(\abs{M} = \abs{L}+\abs{R}\), respectively.

         \begin{enumerate}[label = Case \Roman*., leftmargin = *]
            \item If \(\sigma(0)\in\P\), then
            \begin{align}
            \label{e:pf1_step3}
                \norm{R_{2}} \leq N\frac{m(m+1)}{2}M_{1}^{mN-1}M_{2}^{m-1}\varepsilon.
            \end{align}
            From \eqref{e:pf1_step2} and \eqref{e:pf1_step3}, we have that
            \begin{align}
            \label{e:pf1_step4}
                \norm{M}&\leq\rho c e^{-\lambda(\abs{M}-m(p+q))}+ce^{-\lambda(\abs{M}-m(p+q)-mN)}
                \times N\frac{m(m+1)}{2}M_{1}^{mN-1}M_{2}^{m-1}\nonumber\\
                &=ce^{-\lambda\abs{M}}\Biggl(\rho e^{\lambda m(p+q)}+N\frac{m(m+1)}{2}M_{1}^{mN-1}
                \times M_{2}^{m-1}\varepsilon e^{\lambda m(p+q) + \lambda m N}\Biggr).
            \end{align}
            Applying \eqref{e:maincondn3} to \eqref{e:pf1_step4}, we have that \eqref{e:gues2} holds.
            \item If \(\sigma(0) = N+1\), then
            \begin{align}
            \label{e:pf1_step5}
                \norm{R_{2}} &\leq N\frac{m(m-1)}{2}M_{1}^{mN-1}M_{2}^{m-1}\varepsilon\nonumber\\
                &\leq N\frac{m(m+1)}{2}M_{1}^{mN-1}M_{2}^{m-1}\varepsilon.
            \end{align}
            The condition \eqref{e:gues2} follows under the same set of arguments as for Case I.
         \end{enumerate}

         This completes our proof of Theorem \ref{t:mainres}.
    \end{proof}

\end{document}